\theoremstyle{plain}
\newtheorem{theorem}{Theorem}
\newtheorem{definition}[theorem]{Definition}
\newtheorem{lemma}[theorem]{Lemma}
\newtheorem{corollary}[theorem]{Corollary}
\newtheorem{observation}[theorem]{Observation}
\newcommand{\floor}[1]{\left \lfloor #1 \right \rfloor}
\newcommand{\ceil}[1]{\left \lceil #1 \right \rceil}
\newcommand{\lcp}{\mathsf{lcp}}
\newcommand{\LCE}{\mathsf{LCE}}
\newcommand{\LCP}{\mathsf{LCP}}
\newcommand{\SA}{\mathsf{SA}}
\newcommand{\rmq}{\mathsf{rmq}}
\newcommand{\ISA}{\mathsf{ISA}}
\newcommand{\SSA}{\mathsf{SSA}}
\newcommand{\SLCP}{\mathsf{SLCP}}
\newcommand{\Meta}[3]{\hat{#1}_{#2,#3}}
\newcommand{\CA}[1]{\mathsf{CA}_{#1}}
\title{
  Deterministic sub-linear space LCE data structures with efficient construction
}
\author{
  Yuka~Tanimura$^1$\quad
  Tomohiro~I$^2$\quad
  Hideo~Bannai$^1$\quad
  Shunsuke~Inenaga$^1$\quad\\
  Simon~J.~Puglisi$^3$\quad
  Masayuki~Takeda$^1$\\
  {$^1$ Department of Informatics, Kyushu University, Japan}\\
  {\texttt{\{yuka.tanimura,bannai,inenaga,takeda\}@inf.kyushu-u.ac.jp}}\\
  {$^2$ Kyushu Institute of Technology, Japan}\\
  {\texttt{tomohiro@ai.kyutech.ac.jp}}\\
  {$^3$ Department of Computer Science, University of Helsinki, Finland}\\
  {\texttt{puglisi@cs.helsinki.fi}}
}
\date{}
\begin{document}
\maketitle
\begin{abstract}
  Given a string $S$ of $n$ symbols, a {longest common extension query} $\LCE(i,j)$ asks for 
  the length of the longest common prefix of the $i$th and $j$th suffixes of $S$. LCE queries 
  have several important applications in string processing, perhaps most notably to suffix sorting. 
  Recently, Bille et al. (J. Discrete Algorithms 25:42-50, 2014, Proc. CPM 2015: 65-76)
  described several data structures for answering LCE queries that offers a space-time trade-off between
  data structure size and query time.
  In particular, for a parameter $1 \leq \tau \leq n$,
  their best deterministic solution is a data structure of size $O(n/\tau)$
  which allows LCE queries to be answered in $O(\tau)$ time.
  However, the construction time for all deterministic versions of their data structure
  is quadratic in $n$.
  In this paper, we propose a deterministic solution that achieves a similar space-time trade-off
  of $O(\tau\min\{\log\tau,\log\frac{n}{\tau}\})$ query time using $O(n/\tau)$ space,
  but significantly improve the construction time to $O(n\tau)$. 
\end{abstract}

\section{Introduction}

Given a string $S$ of $n$ symbols, a {\em longest common extension query} $\LCE(i,j)$ asks for
the length of the longest common prefix of the $i$th and $j$th suffixes of $S$. 

The ability to efficiently answer LCE queries allows optimal solutions to many string processing problems. 
Gusfield's book~\cite{Gusfield97}, 
for example, lists several applications of LCEs to basic pattern matching and discovery problems, including: 
pattern matching with wildcards, mismatches and errors;
the detection of various types of palindromes (maximal, complimented, separated, approximate); and
the detection of repetitions and approximate repetitions.
Lempel-Ziv parsing~\cite{KKP13} and suffix sorting~\cite{Karkkainen:2006je,Karkkainen07} are 
two more fundamental string processing problems to which LCEs are key.

Without preprocessing, answering an arbitrary query $\LCE(i,j)$ requires $O(n)$ time: we simply compare the
suffixes starting at positions $i$ and $j$ character by character until we find a mismatch.
To answer queries faster we could build the suffix tree and preprocess it for lowest-common-ancestor 
queries. This well-known solution answers queries in $O(1)$ time and the data structure is of $O(n)$ 
size and takes $O(n)$ time to construct. 

In recent years, motivated by scenarios where $O(n)$ space is prohibitive, several authors have sought 
data structures that achieve a trade-off between data structure size and query time. The best trade-off 
to date is due to Bille et al.'s~\cite{bille15:_longes_common_exten_sublin_space}, who describe 
a data structure of size $O(n/\tau)$ which allows LCE queries to be answered in $O(\tau)$ time.
However, as described in~\cite{bille15:_longes_common_exten_sublin_space}, 
their data structure requires $O(n^2)$ time to construct if only $O(n/\tau)$ working space is allowed. 
This is a major drawback, because it does not allow the space/query time trade-off to be passed on to 
applications --- indeed, construction of the data structure would become a time bottleneck in all the 
applications listed above.

\medskip

The main contributions of this article are as follows:
\begin{enumerate}
\item We describe a new data structure for LCEs that has size $O(n/\tau)$, query time
$O(\tau\log\tau)$, and, critically, can be constructed in $O(n\tau)$ time.
\item We show how to combine the new data structure with one of Bille et al.'s to derive 
a structure that has $O(\tau\log\frac{n}{\tau})$ query time and the same space and construction
bounds as the new structure. As a side result, we also show how this particular structure of Bille 
et al. can be constructed efficiently.
\end{enumerate} 
Table~\ref{table:resultcomparison} summarizes our results and previous work on the problem.

In the next section we lay down notation and some basic algorithmic and data structural tools. 
Then, in Section~\ref{sec:new}, we introduce our new LCE data structures, beginning with a 
a slightly modified version of one of Bille et al.s data structures, followed by the new 
and combined data structures. Section~\ref{sec:construct} deals with efficient construction. 
We finish, in Section~\ref{sec:appl}, by noting that our new structures lead directly to 
improved (deterministic) bounds for the sparse suffix sorting problem.

\begin{table}
  \begin{center}
    \caption{Deterministic solutions to LCE.}
    \label{table:resultcomparison}      
    \begin{tabular}{|c|c|c|c|c|p{3cm}|}\hline
      \multicolumn{2}{|c|}{Data Structure} & \multicolumn{2}{|c|}{Preprocessing} & \multirow{2}{*}{Trade-off range} & \multirow{2}{*}{Reference} \\\cline{1-4}
      Space & Query & Space & Time & &\\\hline
      $1$ & $n$ & 1 & 1 & - & na\"ive computation\\ \hline
      $n$ & 1 & $n$ & $n$ & - & suffix array + RMQ\\ \hline
      $\frac{n}{\tau}$ & $\tau^2$ & $\frac{n}{\tau}$ & $\frac{n^2}{\tau}$ & $1 \leq \tau \leq \sqrt{n}$ & \cite{bille14:_time} \\\hline
      $\frac{n}{\tau}$ & $\tau\log^2\frac{n}{\tau}$ & $\frac{n}{\tau}$ & $n^2$ & $1 \leq \tau \leq n$ & \cite{bille15:_longes_common_exten_sublin_space}, Section 2\\\hline
      $\frac{n}{\tau}$ & $\tau$ & $\frac{n}{\tau}$ & $n^{2+\epsilon}$ & $1 \leq \tau \leq n$ & \cite{bille15:_longes_common_exten_sublin_space}, Section 4\\
      \hline
      $\frac{n}{\tau}$ & $\tau\log^2\frac{n}{\tau}$ & $\frac{n}{\tau}$ & $n\tau + n\log \frac{n}{\tau}$ & $1 \leq \tau \leq n$ & Improved preprocessing for~\cite{bille15:_longes_common_exten_sublin_space}. This work.\\
      \hline
      $\frac{n}{\tau}$ & $\tau\log\tau$ & $\frac{n}{\tau}$ & $n\tau$ &  $1 \leq \tau \leq \frac{n}{\log n}$  & This work.\\\hline
      $\frac{n}{\tau}$ & $\tau\log\frac{n}{\tau}$ & $\frac{n}{\tau}$ & $n\tau$ & $1 \leq \tau \leq n$ & This work.\\\hline
      $\frac{n}{\tau}$ & $\tau\min\{\log\tau,\log\frac{n}{\tau}\}$ & $\frac{n}{\tau}$ & $n\tau$ & $1 \leq \tau \leq n$ & This work.\\\hline
    \end{tabular}
  \end{center}
\end{table}

\section{Preliminaries}
Let $\Sigma = \{ 1, \ldots, \sigma\}$ denote the alphabet, and $\Sigma^*$ the set of strings.
If $w = xyz$ for any strings $w,x,y,z$, then $x$,$y$, and $z$ are respectively
called a {\em prefix}, {\em substring}, and {\em suffix} of $w$.
For any string $w$, let $|w|$ denote the length of $w$,
and for any $0 \leq i < |w|$, let $w[i]$ denote the $i$th character,
i.e., $w = w[0]\cdots w[|w|-1]$.
For convenience, let $w[i] = 0$ when $i \geq |w|$.
For any $0 \leq i \leq j$, let $w[i..j] = w[i]\cdots w[j]$,
and for any $0 \leq i < |w|$, let $w[i..]= w[i..|w|-1]$.
We denote $x\prec y$ if string $x$ is lexicographically smaller than string $y$.

For any string $w$, let $\lcp_w(i,j)$ denote the length of the longest common
prefix of $w[i..]$ and $w[j..]$.
We will write $\lcp(i,j)$ when $w$ is clear from the context.
Since $\lcp_w(i,i) = |w|-i$, we will only consider the case when $i\neq j$.
Note that an LCE query $\LCE(i,j)$ is equivalent to computing $\lcp_w(i,j)$.

For any integers $i\leq j$, let $[i..j]$ denote the set of integers
from $i$ to $j$,
and for $0\leq p < \tau$,
let $[i..j]^\tau_p = \{ k \mid k \in [i..j], k \bmod \tau = p\}$.

For any string $w$ of length $n$ and  $0\leq p < \tau$,
let $\Meta{w}{\tau}{p}$ denote a string of length $\ceil{(|w|-p)/\tau}$
over the alphabet $\{ 1,\ldots, \sigma^\tau \}$
such that $\Meta{w}{\tau}{p}[i] = w[p + \tau i .. p + \tau (i+1) - 1]$
for any $i \geq 0$.
We call $\Meta{w}{\tau}{p}$ the {\em meta-string} of $w$ wrt. sampling rate $\tau$ and offset $p$,
and each character of $\Meta{w}{\tau}{p}$ is called a {\em meta-character}.

In the rest of the paper, we assume a polynomialy bounded integer alphabet, i.e.,
for some constant $c\geq 0$, $\sigma = O(n^c)$ for any input string $w$ of length $n$.

\begin{definition}[\cite{manber93:_suffix_array}]
  The suffix array $\SA_w$ of a string $w$ of length $n$
  is an array of size $n$ containing a permutation of $[0..n-1]$
  that represents the lexicographic order of the suffixes of $w$,
  i.e., $w[\SA_w[0]..] \prec \cdots \prec w[\SA_w[n-1]..]$.
  The inverse suffix array $\ISA_w$ is an array of size $n$
  such that $\ISA_w[\SA_w[i]] = i$ for all $0 \leq i < n$.
  The LCP array $\LCP_w$ of a string $w$ of length $n$ is an
  array of size $n$ such that $\LCP_w[0] = 0$ and
  $\LCP_w[i] = \lcp_w(\SA_w[i-1], \SA_w[i])$ for $0 < i < n$.
\end{definition}

\begin{lemma}[\cite{kasai01:_linear_time_longes_common_prefix,Kim:2003br,Ko:2003dt,Karkkainen:2006je}]
  For any string $w$ of length $n$, $\SA_w,\ISA_w,\LCP_w$ can be computed in $O(n)$ time and space.
\end{lemma}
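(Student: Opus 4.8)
The plan is to treat the three arrays in sequence, using a linear‑time suffix array construction algorithm as the workhorse and then deriving the other two arrays from it cheaply.

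First I would construct $\SA_w$ directly. Because the alphabet is a polynomially bounded integer alphabet, I can use the skew (DC3) algorithm of K\"arkk\"ainen and Sanders~\cite{Karkkainen:2006je}, or the algorithms of~\cite{Kim:2003br,Ko:2003dt}: split the suffixes according to their starting position modulo $3$, recursively sort the suffixes starting at positions $\equiv 1,2 \pmod 3$ by forming a string over triples of characters and renaming these triples by radix sort, and then merge the sorted sample suffixes with the remaining suffixes using a constant number of character comparisons per suffix. Each recursion level costs $O(n)$ — crucially, the renaming radix sort is linear precisely because $\sigma = O(n^c)$ — and the recursive instance has size $\tfrac{2}{3}n$, so the total work sums geometrically to $O(n)$, all within $O(n)$ working space.

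Next, $\ISA_w$ is obtained from $\SA_w$ by a single linear scan that sets $\ISA_w[\SA_w[i]] = i$ for $i=0,\dots,n-1$; this is immediate. Finally I would compute $\LCP_w$ from $w$, $\SA_w$ and $\ISA_w$ by the algorithm of Kasai et al.~\cite{kasai01:_linear_time_longes_common_prefix}. The key observation is that if we index LCP values by text position, writing $H[i] = \LCP_w[\ISA_w[i]]$, then $H[i] \geq H[i-1] - 1$: when $H[i-1] \geq 2$, dropping the shared leading character of $w[i-1..]$ and its suffix‑array predecessor $w[\SA_w[\ISA_w[i-1]]-1..]$ yields $w[i..]$ and a lexicographically smaller suffix that still share a prefix of length $H[i-1]-1$, and the \emph{true} suffix‑array predecessor of $w[i..]$ can only agree on a longer prefix. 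Scanning text positions $i=0,1,\dots,n-1$ and extending the running LCP value by explicit character comparisons therefore performs at most $n$ ``failed'' comparisons and, by the inequality above, at most $2n$ ``successful'' ones in total, giving $O(n)$ time.

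The only genuinely nontrivial ingredient is the first step: linear‑time suffix array construction over an integer alphabet, and it is exactly there that the polynomial‑alphabet assumption is consumed, to make the recursive renaming a linear‑time radix sort rather than a comparison sort. The inverse‑array pass and the Kasai LCP computation are routine, so once $\SA_w$ is in hand the full statement follows.
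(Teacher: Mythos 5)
Your proof is correct and is exactly the argument the paper is invoking: it cites K\"arkk\"ainen--Sanders/Kim et al./Ko--Aluru for the linear-time suffix array over a polynomially bounded integer alphabet and Kasai et al.\ for the LCP array, with $\ISA_w$ obtained by a trivial inversion scan. Your reconstruction of both the DC3 recursion analysis and the $H[i]\geq H[i-1]-1$ amortization in Kasai's algorithm is accurate, so nothing further is needed.
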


For any array $A$ and $0 \leq i\leq j < |A|$,
let $\rmq_A(i,j)$ denote a Range Minimum Query (RMQ),
i.e., $\rmq_A(i,j) = \arg\min_{k\in[i..j]} \{ A[k] \}$.
It is well known that $A$ can be preprocessed in linear time and space
so that $\rmq_A(i,j)$, for any $0 \leq i \leq j < |A|$,
can be answered in constant time~\cite{bender00:_lca_probl_revis}.
Since $\lcp_w(i,j) = \LCP_w[\rmq_{\LCP_w}(i^\prime+1, j^\prime)]$ where
$i^\prime = \min\{\ISA_w(i),\ISA_w(j)\}$
and $j^\prime = \max \{\ISA_w(i),\ISA_w(j)\}$,
it follows that
a string of length $n$ can be preprocessed in $O(n)$ time and space
so that for any $0 \leq i,j < n$, $\lcp_w(i,j)$ can be computed in $O(1)$ time.

Our algorithm relies on sparse suffix arrays.
For a string $w$ of length $n$ and any set $P \subseteq [0..n-1]$ of positions,
let $\SSA_P[0..|P|-1]$ be an array consisting of entries of $\SA$
that are in $P$, i.e.,
for any $0 \leq i < |P|$, $\SSA_P[i] \in P$, and
$w[\SSA_P[0]..] \prec \cdots \prec w[\SSA_P[|P|-1]..]$.
The sparse LCP array $\SLCP_P[0..|P|-1]$ is defined analogously,
$\SLCP_P[i] = \lcp_w(\SSA_P[i-1],\SSA_P[i])$.

Let $1 \leq \tau \leq n$ be a parameter called the {\em sampling rate}.
When, $P = [0..n-1]^\tau_p$, for some $0 \leq p < \tau\leq n$,
$\SSA_P$ is called an evenly spaced sparse
suffix array with sampling rate $\tau$ and offset $p$.
Given an evenly spaced sparse suffix array $\SSA_P$,
we can compute in $O(\frac{n}{\tau})$ time,
a representation of the sparse inverse suffix array $\ISA_P$
as an array $\mathsf{X}$ of size $O(\frac{n}{\tau})$
where $\mathsf{X}[\floor{\SSA_P[i]/\tau}] = i$, 
i.e., $\ISA_P[i] = X[\floor{i/\tau}]$ for all $i\in P$.
By directly applying the algorithm of Kasai et al.~\cite{kasai01:_linear_time_longes_common_prefix},
$\SLCP_P$ can be computed from $\SSA_P$ and (the representation of) $\ISA_P$
in $O(n)$ time and $O(\frac{n}{\tau})$ space.

\section{Data Structure and Query Computation}
\label{sec:new}
Our algorithms are based on the same observation as
used in~\cite{bille15:_longes_common_exten_sublin_space}.

\begin{observation}[\cite{bille15:_longes_common_exten_sublin_space}]
  \label{observation:main}
  For any positions $i,j,k \in [0..n-1]$
  if $\lcp(j,k) \geq \lcp(i,j)$ then,
  $\lcp(i,j) = \min \{ \lcp(i,k), \lcp(j,k) \}$.
\end{observation}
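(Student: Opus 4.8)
The plan is to prove the identity by a short case analysis on whether the hypothesis $\lcp(j,k) \ge \lcp(i,j)$ holds with equality or strictly. Write $\ell = \lcp(i,j)$ and $m = \lcp(j,k)$, so the hypothesis reads $m \ge \ell$, and the goal $\lcp(i,j) = \min\{\lcp(i,k), m\}$ becomes the single equation $\min\{\lcp(i,k), m\} = \ell$ (note this already forces $\lcp(i,k) \ge \ell$, since $m \ge \ell$). I may assume $i \ne j$: if $i = j$ then $\ell = |w|-i$, and $\ell \le m = \lcp(i,k) \le |w|-i$ pins every quantity to $|w|-i$, leaving nothing to prove.

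First I would isolate the standard mismatch fact: for distinct positions $p,q$, writing $t = \lcp_w(p,q)$, we have $w[p+s] = w[q+s]$ for all $0 \le s < t$, and $w[p+t] \ne w[q+t]$. The agreement part is immediate from the definition of $\lcp_w$. For the mismatch, assume $p < q$; then $t \le |w|-q$, so $p+t < |w|$ and $w[p+t] \in \Sigma$, whence $w[p+t] \ne w[q+t]$ --- either because $w[q+t] = 0 \notin \Sigma$ when $q+t \ge |w|$, or by maximality of $t$ when both are genuine letters. This tiny out-of-bounds check (ruling out the possibility that both sides equal the sentinel $0$) is the only place the convention $w[i]=0$ for $i \ge |w|$ has to be invoked with care, and it is the main (and essentially only) obstacle I anticipate; the rest is bookkeeping.

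Then I would carry out the case split. If $m = \ell$: from $w[i+s]=w[j+s]$ for $0 \le s < \ell$ and $w[j+s]=w[k+s]$ for $0 \le s < \ell$ (using $m = \ell$) I get $w[i+s]=w[k+s]$ for $0 \le s < \ell$, so $\lcp(i,k) \ge \ell$ and $\min\{\lcp(i,k),m\} = \min\{\lcp(i,k),\ell\} = \ell$. If $m > \ell$: now $w[j+s]=w[k+s]$ holds for all $0 \le s \le \ell$; combined with $w[i+s]=w[j+s]$ for $0 \le s < \ell$ this gives $w[i+s]=w[k+s]$ for $0 \le s < \ell$, while at $s = \ell$ the mismatch fact applied to the pair $i,j$ gives $w[i+\ell] \ne w[j+\ell] = w[k+\ell]$, so $\lcp(i,k) = \ell$ exactly and $\min\{\lcp(i,k),m\} = \min\{\ell,m\} = \ell$. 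In both cases $\lcp(i,j) = \ell = \min\{\lcp(i,k),\lcp(j,k)\}$, which is the claim.
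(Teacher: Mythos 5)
Your proof is correct. The paper states this observation without proof (it is imported from Bille et al.\ and treated as known), and your case analysis on $\lcp(j,k) = \lcp(i,j)$ versus $\lcp(j,k) > \lcp(i,j)$, together with the careful handling of the sentinel convention $w[i]=0$ for $i \geq |w|$ in the mismatch step, is exactly the standard argument one would supply.
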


The observation allows us to reduce the computation of LCP values between a pair of positions,
to the computation of LCP values between a different pair that are in some subset
of positions.
For each specific position $i$ called sampled positions,
and for each such subset $S$,
a position $\pi(i,S) = \arg \max_{i^{\prime} \in S} \{\lcp(i,i^{\prime})\}$ is precomputed.
The idea is that the size of $S$ gets smaller after each reduction,
therefore giving a bound on the query time.

\begin{corollary}\label{corollary:main}
  For any pair of positions $i \in S\subseteq [0..n-1]$ and $j \in [0..n-1]$,
  $\lcp(i,j) = \min \{ \lcp(i,\pi(j,S)),$ $\lcp(j,\pi(j,S)) \}$.
\end{corollary}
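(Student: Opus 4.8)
The plan is to apply Observation~\ref{observation:main} essentially verbatim, with the three positions instantiated as $i$, $j$, and $k := \pi(j,S)$. First I would unfold the definition of $\pi(j,S)$: by construction it is a position of $S$ that maximizes $\lcp(j,\cdot)$ over $S$, so in particular $k \in S$ and $\lcp(j,k) \ge \lcp(j,i')$ for every $i' \in S$.

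Next, I would use the hypothesis $i \in S$ to instantiate this inequality at $i' = i$, obtaining $\lcp(j,k) \ge \lcp(j,i) = \lcp(i,j)$, where the last equality is just the symmetry of $\lcp$. This inequality is precisely the premise of Observation~\ref{observation:main} for the triple $(i,j,k)$ (the observation asks for $\lcp(j,k) \ge \lcp(i,j)$). Invoking it therefore yields $\lcp(i,j) = \min\{\lcp(i,k), \lcp(j,k)\}$, and substituting $k = \pi(j,S)$ back in gives exactly the claimed identity.

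There is no real obstacle here; the corollary is a direct specialization of the observation. The only points requiring a little care are bookkeeping ones: matching the roles of $i$, $j$, $k$ to the (asymmetric-looking) hypothesis of Observation~\ref{observation:main} — note it is $\pi$ of $j$, not of $i$, that is used — and silently relying on $\lcp(j,i) = \lcp(i,j)$ so that the maximality of $\pi(j,S)$ bounds $\lcp(i,j)$ from above. Since $i \in S$, the set $S$ is nonempty and $\pi(j,S)$ is well defined (any maximizer will do if there are ties).
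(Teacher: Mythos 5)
Your proof is correct and is exactly the intended derivation: the paper gives no explicit proof, but the corollary is meant to follow from Observation~\ref{observation:main} by taking $k=\pi(j,S)$ and using $i\in S$ together with the maximality of $\pi(j,S)$ to verify the premise $\lcp(j,k)\ge\lcp(i,j)$. Nothing is missing.
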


\subsection{Bille et al.'s Data Structure~\cite{bille15:_longes_common_exten_sublin_space}}
\label{subsection:billetal}
We first introduce a slightly modified version of the deterministic data structure
by Bille et al.~\cite{bille15:_longes_common_exten_sublin_space}
that uses $O(\frac{n}{\tau})$ space and allows queries in
$O(\tau\log^2(\frac{n}{\tau}))$ time,
where $\tau$ is a parameter in the range $1 \leq \tau \leq n$.
We note that the modifications do not affect the asymptotic complexities.

Let $t = \tau\ceil{\log \frac{n}{\tau}}$, and $\mathcal{P} = [0..n-1]^t_p$, where $p = (n-1) \bmod t$,
be the set of positions called {\em sampled positions}.
The data structure of~\cite{bille15:_longes_common_exten_sublin_space} to
answer $\lcp(i,j)$ for any $0 \leq i < j < n$
consists of two main parts, one for when $j-i \geq t$, and 
the other for when $j - i < t$.
Since we will use the latter part as is, we will only describe the former.
The query time, space, and preprocessing time of the latter part are respectively,
$O(\tau\log\frac{n}{\tau})$, $O(\frac{n}{\tau})$, and $O(n)$
(See Section 2 of~\cite{bille15:_longes_common_exten_sublin_space}).

Consider a full binary tree where the root corresponds to the interval $[0..n-1]$,
and for any node, the left and right children split their parent interval almost evenly,
but assuring that the right-most position in the left child is a sampled position.
Thus, there will be $\ceil{n/t}$ leaves corresponding to intervals of size $t$
(except perhaps for the left most leaf which may be smaller),
and the height of the tree is $O(\log\frac{n}{t})$.
For any internal node $v$ in the tree with interval $I_v$, let $I_{\ell(v)}$ and $I_{r(v)}$
respectively be its left and right children.
For all sampled positions $i \in I_{r(v)}\cap \mathcal{P}$,
a position
$\pi(i,I_{\ell(v)}) = \arg\max_{i^\prime \in I_{\ell(v)}} \{ \lcp(i, i^{\prime}) \}$
and $L(i,I_{\ell(v)}) = \lcp(i,\pi(i,I_{\ell(v)}))$
are computed and stored.
The size of the data structure is therefore
$O(\frac{n}{t}\log\frac{n}{t}) = O(\frac{n}{\tau})$.

Assume w.l.o.g. that $j > i$.
A query for $\lcp(i,j)$ with $j-i \geq t$ is computed as follows.
First, 
compare up to $\delta < t$ characters of $w[i..]$ and $w[j..]$
until we encounter a mismatch, in which case we obtain an answer,
or $j+\delta$ is a sampled position.
Let
$I_v$ be the interval such that $i+\delta \in I_{\ell(v)}$ and $j+\delta \in I_{r(v)}$.
From the preprocessing, we obtain a position $\pi(j+\delta,I_{\ell(v)}) \in I_{\ell(v)}$, 
which, from Corollary~\ref{corollary:main}, gives:
\begin{eqnarray*}
  \lcp(i,j) &=& \delta + \lcp(i+\delta,j+\delta)\\
  &=& \delta + \min \{ \lcp(i+\delta, \pi(j+\delta,I_{\ell(v)})),
                       \lcp(j+\delta, \pi(j+\delta,I_{\ell(v)})) \} \\
  &=& \delta + \min \{ \lcp(i+\delta, \pi(j+\delta,I_{\ell(v)})),
                        L(j+\delta, I_{\ell(v)}) \} 
\end{eqnarray*}
Thus, the problem can be reduced to computing $\lcp(i+\delta,\pi(j,I_{\ell(v)}))$,
where both $i+\delta,\pi(j,I_{\ell(v)}) \in I_{\ell(v)}$, and we apply the algorithm
recursively.
Note that if $j\in I_{r(v)}$ we have, from the definition of the intervals,
that $j+\delta\in I_{r(v)}$, so each recursion takes us further down the tree.
When an interval corresponding to a leaf node is reached,
we have that $j-i < t$ and use the other data structure
(for a description of which we refer the reader to~\cite{bille15:_longes_common_exten_sublin_space}).
Since we compare up to $t$ characters at each height, the total query time is
$O(t \log \frac{n}{t}) = O(\tau\log^2\frac{n}{\tau})$.

\subsection{New Data Structure}
\label{subsection:newdatastructure}
Let $t = \tau\ceil{\log\tau}$, $p = (n-1) \bmod t$,
and let $\mathcal{P} = [0..n-1]^t_p$ be the set of sampled positions.
Instead of considering a hierarchy of intervals of positions,
we classify the positions according to
their distance to the closest sampled position to their right.
Define 
$S_k = \{ i \mid (i + d) \bmod t = p, d \in ([2^{k-1}..2^k -1]\cap[1..t-1]) \}$
for $k = 1, \ldots, \ceil{\log t}$ (see also Figure~\ref{fig:S_k}).
\begin{figure}
\begin{center}
\includegraphics[width=0.6\linewidth]{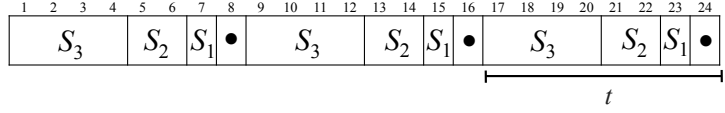}
\caption{Examples of $S_k$ with $k = 1, 2, 3$ for the sampled positions
specified by black dots.}
\label{fig:S_k}
\end{center}
\end{figure}
The preprocessing computes and stores for each sampled position $i \in \mathcal{P}$
and each $S_k$, a position
$\pi(i,S_k) = \arg\max_{i^{\prime} \in S_k} \{ \lcp(i, i^{\prime}) \}$,
and $L(i,S_k) = \lcp(i,\pi(i,S_k))$.
Also, $\SLCP_\mathcal{P}$ is computed and preprocessed for range minimum queries so that
for any $i,j \in \mathcal{P}$, $\lcp(i,j)$ can be computed in constant time.
Thus, the space required for the data structure
is $O(\frac{n}{t}\log t) = O(\frac{n}{\tau})$.

A query $\lcp(i,j)$ is computed as follows.
First, compare up to $\delta$ characters of
$w[i..]$ and $w[j..]$ until we encounter a mismatch, in which case we obtain an answer,
or, either $i+\delta$ or $j+\delta$ is a sampled position.
If both $i+\delta$ and $j+\delta$ are sampled positions,
$\lcp(i,j) = \delta + \lcp(i+\delta,j+\delta)$ can be answered
in constant time.
Assume w.l.o.g. that only $j+\delta$ is a sampled position, and 
let $k$ be such that $i+\delta\in S_{k}$.
Then, from Corollary~\ref{corollary:main} and the preprocessing,
we have
\begin{eqnarray*}
  \lcp(i,j) &=& \delta + \lcp(i+\delta,j+\delta)\\
  &=& \delta + \min \{ \lcp(i+\delta, \pi(j+\delta,S_k)),
                       \lcp(j+\delta, \pi({j+\delta,S_k})) \} \\
  &=& \delta + \min \{ \lcp(i+\delta, \pi(j+\delta,S_k)),
                       L(j+\delta,S_k) \} 
\end{eqnarray*}
and the problem has been reduced to computing
$\lcp(i+\delta, \pi(j+\delta,S_k))$ where both $i+\delta,\pi(j+\delta,S_k) \in S_{k}$,
and the processes are repeated.
Notice that in the next step, at least $2^{k-1}$ characters are compared
until one of the two positions reaches a sampled position.
This implies that the remaining distance to the closest sampled position
of the other position will be at most $2^{k-1}-1$, and thus will be in $S_{k^{\prime}}$
for some $k^{\prime} \leq k-1$.
Therefore, the process will only be repeated at most $\ceil{\log t}$ times.
Because the number of characters compared in each step is bounded by
$t$ and is at least halved every step,
the total number of character comparisons and thus the query time is $O(t) = O(\tau\log\tau)$.

\subsection{Combining the structures}
We can combine the structures described in Sections~\ref{subsection:billetal} and~\ref{subsection:newdatastructure},
to achieve $O(\tau\log\frac{n}{\tau})$ query time using $O(\frac{n}{\tau})$ space.

Let $t = \tau\ceil{\log \frac{n}{\tau}}$, and $\mathcal{P} = [0..n-1]^t_p$, where $p = (n-1) \bmod t$,
be the set of positions called {\em sampled positions}.
We consider both the structures described in Section~\ref{subsection:billetal} and Section~\ref{subsection:newdatastructure},
with the following modifications.
Let $d_t = 2^{\ceil{\log t} - {\ceil{\log\frac{n}{t}}}} = O(\frac{t^2}{n})$.
For Bille et al.'s structure, we make two modifications.
First, we modify the data structure so that
for each node $I_v$ and sampled position $i \in I_{r(v)}\cap \mathcal{P}$,
we only consider points that are at most $d_t$ from the closest sampled position to the right,
i.e.,
instead of $\pi(i,I_{\ell(v)})$ and $L(i,I_{\ell(v)})$,
we compute and store $\pi(i,I_{\ell(v)}\cap D) = \arg\max_{i^\prime \in I_{\ell(v)}\cap D} \{ \lcp(i, i^{\prime}) \}$
and $L(i,I_{\ell(v)}\cap D)$,
where $D = \{ i^\prime \mid (i^\prime+d)\bmod t = p, 0 \leq d < d_t \}$.
In addition to this, we compute and store for all sampled position
$i \in I_{\ell(v)}\cap \mathcal{P}$,
a position
$\pi(i,I_{r(v)}\cap D) = \arg\max_{i^\prime \in I_{r(v)}\cap D} \{ \lcp(i, i^{\prime}) \}$
and $L(i,I_{r(v)}\cap D) = \lcp(i,\pi(i,I_{r(v)}\cap D))$.
This will only double the total size of the structure and thus the space remains $O(\frac{n}{\tau})$.
For the new data structure,
we keep the definition of $\pi(i,S_k)$ and $L(i,S_k)$,
but only for values $k = \ceil{\log t} - \ceil{\log\frac{n}{t}}, \ldots, \ceil{\log t}$.
Thus, although the value of $t$ has changed,
the total size of the data structure is still
$O(\frac{n}{t}\log\frac{n}{t}) = O(\frac{n}{\tau})$. 

Queries $\lcp(i,j)$ are answered as follows:
first use the new data structure recursively using the same algorithm until the problem is
reduced to a query between a sampled position and another position not in any
$S_k~(k\in [\ceil{\log t} - \ceil{\log\frac{n}{t}}..\ceil{\log t}])$.
This means that the distance from each of the query positions to the closest sampled position is at most $d_t$.
The total number of character comparisons conducted is $O(t) = O(\tau\log\frac{n}{\tau})$.
Then, we switch to Bille et al.'s structure using the same algorithm with the exception
of when comparing up to $\delta$ characters of $w[i..]$ and $w[j..]$, we continue until
{\em either} $i+\delta$ {\em or} $j+\delta$ (instead of just $j+\delta$)
is a sampled position.
Since the distance to the closest sampling position is at most $O(\frac{t^2}{n})$ and
by definition of $\pi(i,I_{\ell(v)}\cap D)$ and $\pi(i,I_{r(v)}\cap D)$,
we have that this condition holds for all following recursions.
Thus, at most $O(\frac{t^2}{n})$ character comparisons will be conducted for each height,
for a total of $O(\frac{t^2}{n}\log \frac{n}{t}) = O(t) = O(\tau\log\frac{n}{\tau})$.\footnote{Letting $x = \frac{n}{t}$, $O(\frac{t^2}{n}\log\frac{n}{t}) = O(t\frac{\log x}{x}) = O(t)$.}

\section{Building the Structures}
\label{sec:construct}
Bille et al.~\cite{bille15:_longes_common_exten_sublin_space}
describe a preprocessing that runs in
$O(n^2)$ time\footnote{However, we believe the analysis in Section 2.5 of~\cite{bille15:_longes_common_exten_sublin_space} is not entirely correct; although the size of $|I|$ is halved at each level their numbers double, and so the time complexity should be $O(n\cdot n + n\cdot(n/2)\cdot 2 \cdots + n\cdot(n/t)\cdot t) = O(n^2\log\frac{n}{t})$ time. Also, they assume that the evenly spaced sparse suffix array can be constructed in $O(n)$ time and $O(n/\tau)$ space, for the integer alphabet. However, the paper they cite assumes a constant size alphabet and to the best of our knowledge, we do not know of an algorithm achieving such space/time trade-off.}
and $O(\frac{n}{\tau})$ space.
Here, we show that this can be reduced to
$O(\tau n + n\log n)$ time using the same space.
While the algorithm of~\cite{bille15:_longes_common_exten_sublin_space}
builds the sparse suffix array containing only the suffixes starting at
sampled positions and applies pattern matching,
our trick is to build a sparse suffix array and sparse LCP array that includes
other suffixes as well, in several (namely $\tau$) rounds,
so that the suffix with maximum LCP with respect to each sampled position
can be found by scans on the suffix array.

For integer alphabets, sparse suffix arrays and sparse LCP arrays can be constructed in
$O(n)$ time if $O(n)$ space is allowed, simply by first building the (normal)
suffix array and LCP array and removing the unwanted elements.
For constant size alphabets, the evenly spaced sparse suffix array and sparse LCP array
with sampling rate $\tau$
can be constructed in $O(n)$ time and $O(\frac{n}{\tau})$
space~\cite{karkkainen96:_spars_suffix_trees}.
However, when the alphabet size $\sigma$ is not constant,
this is $O(n\log\sigma)$ time and $O(\frac{n}{\tau})$ space,
since the computation is based on character comparisons.
(Notice that linear time algorithms for computing the suffix array
for the meta string will not achieve $O(n)$ time and $O(\frac{n}{\tau})$ space,
since the use of radix sort requires at least $O(\sigma)$ space for the buckets.)
If this is repeated $\tau$ times, this results in $O(n\tau\log\sigma)$ time using
$O(\frac{n}{\tau})$ space.

We first describe a technique to 
compute the sparse suffix array and LCP array that contains two sets of evenly spaced suffixes,
namely for offsets $p$ and $q$,
and
to repeat this $\tau$ times,
namely for offsets $p = (n-1)\bmod \tau$ and $q = (n-1)\bmod\tau,\ldots,(n-\tau)\bmod\tau$,
so that the total time for their construction is $O(n\tau)$ time using $O(\frac{n}{\tau})$ space.
Then, we describe the construction of the data structures of Section~\ref{sec:new}
using this technique.

\subsection{Common Tools}
\label{subsection:commontools}
For any string (or meta-string) $w$ and $0 \leq i < |w|$,
let $\CA{w}$ denote an array containing a permutation of
$[0..|w|-1]$ such that
$w[\CA{w}[i]] \leq w[\CA{w}[j]]$ for any $0 \leq i < j < |w|$, i.e.,
$\CA{w}$ is an array of positions sorted according to the
character at each position.
(Note that $\CA{w}$ is not necessarily unique.)

\begin{lemma}\label{lemma:sortingmetacharacters}
  For any string $w$ and $0 \leq p < \tau$,
  $\CA{\Meta{w}{\tau}{p}}$
  can be computed in
  $O(n\tau)$ time using
  $O(\frac{n}{\tau})$ space.
\end{lemma}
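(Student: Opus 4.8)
\emph{Proof proposal.}
The plan is to produce $\CA{\Meta{w}{\tau}{p}}$ by explicitly string-sorting the $m := \ceil{(n-p)/\tau}$ meta-characters, which are the length-$\tau$ blocks $B_i := w[p+\tau i .. p+\tau(i+1)-1]$ over $[1..\sigma]$, using an algorithm whose internal radix depends on the size of $\tau$. I will use throughout that $m = O(n/\tau)$ and that the total length $m\tau$ of all blocks is $O(n)$ (both follow from $\tau \le n$), and that any symbol of any block can be read from the input $w$ in $O(1)$ time, so the meta-string is never materialized. I split into two regimes: $m \le \sqrt{n}$ (equivalently $\tau = \Omega(\sqrt{n})$) and $m > \sqrt{n}$ (equivalently $\tau = O(\sqrt{n})$).

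When $m \le \sqrt{n}$, I would simply merge-sort the $m$ blocks, comparing two blocks symbol by symbol in $O(\tau)$ time per comparison. This uses $O(m) = O(n/\tau)$ working space and $O(m\tau\log m)$ time. Since $m \le \sqrt{n}$ forces $\tau \ge \sqrt{n}/2$, for all sufficiently large $n$ we have $\log m \le \tfrac12\log n \le \tau$, and together with $m\tau = O(n)$ this gives $O(m\tau\log m) = O(n\tau)$ time, as required.

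When $m > \sqrt{n}$, I would LSD radix-sort the blocks using base $b := m \ge 2$. Writing $r := \floor{\log_b\sigma}+1$, every symbol in $[1..\sigma]$ has a unique zero-padded, most-significant-digit-first $r$-digit base-$b$ code, and this code is strictly order preserving; concatenating the codes of the $\tau$ symbols of a block yields a length-$\tau r$ digit string whose lexicographic order agrees with that of the block (all codes have the same width $r$). Running $\tau r$ counting-sort passes from the least to the most significant digit, each using $b=m$ buckets and extracting its digit from $w$ by $O(1)$ arithmetic (after precomputing $b^0,\dots,b^{r-1}$), sorts the blocks. Each pass costs $O(m+b)=O(m)$ and the passes reuse a constant number of $O(m)$-size arrays; since $b=m>\sqrt{n}$ and $\sigma = O(n^c)$ we get $r \le \tfrac{c\log n}{\log m}+O(1) = O(1)$, so the space is $O(m)=O(n/\tau)$ and the running time is $O(\tau r m) = O(\tau m) = O(n) = O(n\tau)$. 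The resulting permutation of $[0..m-1]$ is a valid $\CA{\Meta{w}{\tau}{p}}$.

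The hard part is the cost analysis of the radix sort rather than the algorithm itself: a naive counting sort over the symbol alphabet would need $\Theta(\sigma)$ bucket space, far exceeding $O(n/\tau)$, so the digits must be taken in base $\Theta(m)$, which keeps the number of digits per symbol constant precisely when $\tau$ is small; for large $\tau$ one instead falls back to comparison sort and absorbs the extra $\log$ factor into the (then much larger) $O(n\tau)$ budget. The remaining steps are routine: verifying $m = O(n/\tau)$ and $m\tau = O(n)$ from $\tau \le n$, checking that fixed-width base-$b$ codes are order preserving, and the two elementary inequalities relating $m$, $\tau$, $\sqrt{n}$ and $\log n$.
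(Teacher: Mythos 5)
Your proposal is correct and its core idea is the same as the paper's: an LSD radix sort of the length-$\tau$ blocks in base $\Theta(n/\tau)$ so that the bucket space stays within $O(n/\tau)$, with the number of digits per symbol controlled via $\sigma = O(n^c)$. The only difference is that for $\tau = \Omega(\sqrt{n})$ you switch to a comparison sort, whereas the paper keeps running the same radix sort and simply observes that its $O\bigl(\frac{n\log n}{\log(n/\tau)}\bigr)$ running time is still $O(n\tau)$ in that regime; both variants are valid.
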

\begin{proof}
  Since each character of $w$ can be represented in $O(\log n)$ bits,
  the length of each meta-character of $\Meta{w}{\tau}{p}$ is $O(\tau\log n)$ bits.
  We simply use LSD radix sort with a bucket size of $\frac{n}{\tau}$,
  i.e., we bucket sort using $\log(n/\tau)$ bits at a time.
  Thus, $O(\frac{\tau\log n}{\log(n/\tau)})$ rounds of bucket sort
  is conducted on $\frac{n}{\tau}$ items, resulting in
  $O(\frac{n \log n}{\log(n/\tau)}) = O(n\tau)$\footnote{
    We can assume that $\log (n/\tau) \geq 1$.
    If $\tau \leq \sqrt{n}$, then $\frac{\log n}{\log (n/\tau)} \leq 2 = O(\tau)$,
    otherwise, $\frac{\log n}{\log (n/\tau)} < \log n = O(\tau)$.
  }, giving the result.
\end{proof}

\begin{lemma}\label{lemma:sortingnextmetacharacters}
  For any string $w$ and $0 \leq p < \tau$,
  $\CA{\Meta{w}{\tau}{p}}$
  can be computed from
  $\CA{\Meta{w}{\tau}{p^\prime}}$, where $p^\prime = (p+1)\bmod\tau$,
  in
  $O(n)$ time and $O(\frac{n}{\tau})$ space.
\end{lemma}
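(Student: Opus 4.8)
The plan is to exploit the tight relationship between $\Meta{w}{\tau}{p}$ and $\Meta{w}{\tau}{p'}$, where $p' = (p+1)\bmod\tau$: every meta-character of one is, up to a one-character shift, a meta-character of the other. Concretely, for $p \geq 1$ we have $\Meta{w}{\tau}{p}[i] = w[p+\tau i \mathinner{.\,.} p+\tau(i+1)-1]$, which overlaps $\Meta{w}{\tau}{p-1}[i] = w[p-1+\tau i \mathinner{.\,.} p-1+\tau(i+1)-1]$ in all but one position on each side. So sorting the meta-characters of $\Meta{w}{\tau}{p}$ amounts to re-sorting a list we have already (essentially) sorted, after shifting the ``window'' by one. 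The key idea is that, given $\CA{\Meta{w}{\tau}{p'}}$, we can obtain $\CA{\Meta{w}{\tau}{p}}$ by a constant number of stable bucket-sort passes keyed on just one or two individual characters of $w$, which is exactly the kind of refinement that costs $O(n/\tau)$ per pass (the number of items) plus $O(\sigma)$ for the buckets — and the latter we must avoid. To handle the alphabet, I would first remap characters: since $\sigma = O(n^c)$, a single global sort of the $n$ characters of $w$ (done once, outside this lemma, or folded into $O(n\log n)$ preprocessing) lets us replace each $w[i]$ by its rank in $[0\mathinner{.\,.} n-1]$, after which radix-sorting on one character at a time uses buckets of size $O(n/\tau)$ — wait, that is still $n$, not $n/\tau$. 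The cleaner route is: having $\CA{\Meta{w}{\tau}{p'}}$ already groups the $n/\tau$ items into runs of equal meta-character; within the algorithm we only ever need to compare the \emph{differing} characters, and there are only $O(n/\tau)$ items whose relative order we must settle, so we can sort those differing characters by their already-available ranks using a counting sort whose universe is the set of \emph{distinct values actually occurring among those $O(n/\tau)$ positions} — at most $n/\tau$ of them — giving $O(n/\tau)$ time per pass.

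Here is the sequence of steps I would carry out. First, establish the structural identity linking consecutive offsets: for $1 \le p < \tau$, position $p + \tau i$ starts meta-character $i$ of $\Meta{w}{\tau}{p}$, and this block equals the block of $\Meta{w}{\tau}{p-1}$ starting at $i$ with its first character $w[p-1+\tau i]$ dropped and the character $w[p+\tau(i+1)-1]$ appended. Handle the wrap-around case $p=0$ (so $p' = 1$? no — here $p' = (p+1)\bmod\tau$, so the lemma builds offset $p$ from offset $p+1$; symmetric, just reverse the shift direction) by noting that the two meta-strings may differ in length by one and that boundary meta-characters involve the padding convention $w[i]=0$ for $i \ge |w|$; this is an $O(1)$ fix-up. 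Second, show that $\CA{\Meta{w}{\tau}{p'}}$ determines, for free, a partition of $[0\mathinner{.\,.} n/\tau-1]$ into maximal runs of positions whose $\Meta{w}{\tau}{p'}$-meta-characters are equal (scan once, comparing consecutive meta-characters — but each such comparison costs $O(\tau)$, giving $O(n)$ total, which is fine). Third, within each such run, the order under $\Meta{w}{\tau}{p}$ is governed entirely by the one character that was shifted in/out; sort each run by that single character. To keep this within $O(n/\tau)$ total and avoid an $O(\sigma)$ bucket array, collect all the relevant single characters across all runs — there are $O(n/\tau)$ of them — sort \emph{this small multiset} once by character rank (the ranks being available from a global preprocessing sort of $w$'s characters, or computed here in $O(n/\tau \cdot \log(n/\tau)) = O(n)$ time by comparison sort since comparing two characters is $O(1)$), then distribute back into the runs. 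Fourth, concatenate the sorted runs in run order to obtain $\CA{\Meta{w}{\tau}{p}}$; by stability this respects the higher-order meta-character comparison, and within a run the single shifted character is the unique tiebreaker, so correctness follows from the identity in step one.

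The main obstacle is the alphabet/space tension: we cannot afford $O(\sigma)$ working space for bucket arrays, yet we want per-pass cost $O(n/\tau)$ rather than $O((n/\tau)\log(n/\tau))$, and naively counting-sort on characters needs a size-$\sigma$ array. The resolution — sorting only the $O(n/\tau)$ characters that actually participate, against precomputed ranks, using a counting sort restricted to their own value-set — is the delicate part and the place where one must be careful that the total over all $\tau$ invocations still telescopes to $O(n\tau)$ (indeed $O(n)$ per invocation times $\tau$ invocations, which is what the companion construction claims). A secondary but genuinely fiddly point is getting the boundary meta-characters right when $|\Meta{w}{\tau}{p}| \ne |\Meta{w}{\tau}{p'}|$ and when the shift pulls in a ``$0$'' past the end of $w$; I would dispatch this with an explicit $O(1)$-items special case rather than trying to fold it into the main argument. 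If a global rank array for $w$ is not assumed available, note that computing it costs $O(n\log\sigma) = O(n\log n)$ once — consistent with the $n\log n$ term appearing in the paper's stated preprocessing bounds — and is shared across all $\tau$ calls, so it does not affect the per-call $O(n)$ claim.
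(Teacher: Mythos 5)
Your step~4 contains a genuine error that invalidates the algorithm. Since $p^\prime=(p+1)\bmod\tau$, the new meta-character $\Meta{w}{\tau}{p}[i]$ is obtained from $\Meta{w}{\tau}{p^\prime}[i]$ by \emph{prepending} $w[p+\tau i]$ and dropping the last symbol; the shifted-in character is therefore the \emph{most significant} position of the new key, not a tiebreaker. Partitioning $\CA{\Meta{w}{\tau}{p^\prime}}$ into runs of equal old meta-characters, sorting inside each run by the new character, and concatenating the runs in run order yields an order whose primary key is the old meta-character and whose secondary key is the new character --- the reverse of what is required, and the two orders genuinely differ across runs. Concretely, take $w=\texttt{abba}$, $\tau=2$, $p=0$, $p^\prime=1$: then $\Meta{w}{2}{1}=(\texttt{bb},\texttt{a}0)$, so $\CA{\Meta{w}{2}{1}}=(1,0)$ with two singleton runs and your output is $(1,0)$, whereas $\Meta{w}{2}{0}=(\texttt{ab},\texttt{ba})$ forces $\CA{\Meta{w}{2}{0}}=(0,1)$. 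The repair is exactly the paper's proof: treat the prepended character as a new most-significant digit and apply one \emph{global stable} sorting pass keyed on it to the entire list $\CA{\Meta{w}{\tau}{p^\prime}}$ (which, being sorted by the old meta-characters, is already a valid $\CA{}$-order for those meta-characters with their last symbol removed). In other words, simply continue the LSD radix sort of Lemma~\ref{lemma:sortingmetacharacters} with $O(\log n/\log(n/\tau))=O(\tau)$ extra rounds of bucket sort over the $\log(n/\tau)$-bit chunks of the one new character, costing $O(\tau)\cdot O(n/\tau)=O(n)$ time with $n/\tau$ buckets.

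Your treatment of the alphabet also misses the stated bounds even where the order is correct. A global rank array for $w$ occupies $\Theta(n)$ words, which already violates the $O(n/\tau)$ space budget, and building it in $O(n\log n)$ time exceeds the overall $O(n\tau)$ construction time when $\tau=o(\log n)$; the comparison-sort fallback costs $\Theta(\frac{n}{\tau}\log\frac{n}{\tau})$, which is not $O(n)$ for small $\tau$ (already $\Theta(n\log n)$ at $\tau=2$). The chunked bucket sort above avoids both issues, and it also disposes of your run-detection step (no $O(\tau)$-time comparisons of adjacent meta-characters are ever needed). Your boundary observations (the off-by-one length difference and the $w[i]=0$ padding) are correct but peripheral.
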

\begin{proof}
  We simply continue the LSD radix sort,
  and do an extra $O(\frac{\log n}{\log(n/\tau)}) = O(\tau)$ rounds of bucket sort
  for the preceding character of each meta-character.
\end{proof}

\begin{lemma}\label{lemma:lcp_for_twomods}
  For any string $w$, $0 \leq p,q < \tau$,
  let $P = [0..n-1]^\tau_p$ and $Q = [0..n-1]^\tau_q$.
  Given $\CA{\Meta{w}{\tau}{p}}$
  and $\CA{\Meta{w}{\tau}{q}}$,
  $\SSA_{P\cup Q}$ and $\SLCP_{P\cup Q}$
  can be computed in $O(n)$ time using $O(\frac{n}{\tau})$ space.
\end{lemma}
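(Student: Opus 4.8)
The plan is to reduce the construction of $\SSA_{P\cup Q}$ and $\SLCP_{P\cup Q}$ to building the (ordinary) suffix array and LCP array of a single meta-string obtained by merging the two sampling offsets. First I would handle the case $p = q$ trivially (then $P \cup Q = P$ and we appeal to the fact that, given $\CA{\Meta{w}{\tau}{p}}$, the meta-string $\Meta{w}{\tau}{p}$ has its characters sorted, so a linear-time suffix-array algorithm for an integer alphabet that is given the sorted order of symbols runs in $O(|\Meta{w}{\tau}{p}|) = O(n/\tau)$ time and space; then recover $\SSA_P$ and $\SLCP_P$ by scaling positions by $\tau$ and adding $p$). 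For the main case $p \neq q$, I would construct a meta-string $M$ over $\{1,\dots,\sigma^\tau\}$ whose symbols are the meta-characters occurring at positions of $P$ together with those occurring at positions of $Q$; concretely, form the concatenation of the two meta-strings (say $\Meta{w}{\tau}{p}$ followed by a separator smaller than every meta-character, followed by $\Meta{w}{\tau}{q}$), of total length $O(n/\tau)$.

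The key step is to rename the meta-characters of $M$ to a small integer alphabet of size $O(n/\tau)$ while computing the sorted order. From $\CA{\Meta{w}{\tau}{p}}$ and $\CA{\Meta{w}{\tau}{q}}$ I have the two lists of meta-characters individually sorted; I would merge these two sorted sequences in $O(n/\tau)$ time. To compare two meta-characters during the merge — one from offset $p$, one from offset $q$ — I cannot afford to compare all $\tau$ underlying symbols per comparison, since the merge performs $\Theta(n/\tau)$ comparisons and that would cost $O(n)$ total only if each comparison is $O(\tau)$... which is in fact acceptable: $O(n/\tau)$ comparisons at $O(\tau)$ time each is $O(n)$. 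So the merge is done in $O(n)$ time by direct character-wise comparison of the $\tau$-symbol blocks. During the merge I assign consecutive integer ranks, giving a renamed string $M'$ over alphabet $[1..O(n/\tau)]$, together with a table mapping each rank back to its originating position in $w$ (i.e.\ to $p + \tau i$ or $q + \tau i$). Then I build the suffix array and LCP array of $M'$ in $O(|M'|) = O(n/\tau)$ time and space by the cited linear-time algorithms.

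Finally I would translate back. A suffix of $M'$ starting at meta-position $i$ corresponds to a suffix of $w$ starting at the associated text position $x_i \in P \cup Q$; scanning the suffix array of $M'$ and discarding the entry for the separator yields $\SSA_{P\cup Q}$. For the LCP array I must be careful: two suffixes of $w$ that agree on $m$ full meta-characters and then differ inside the $(m{+}1)$-st block have $\lcp_w = \tau m + (\text{length of agreement inside that block})$, and if one of the two text positions is from $P$ and the other from $Q$, the block boundaries are misaligned — but since both suffixes run to the end of $w$, the common prefix length is still exactly determined, and the extra in-block agreement can be computed by at most $O(\tau)$ character comparisons per adjacent pair in $\SSA_{P\cup Q}$, i.e.\ $O(n)$ total. (Alternatively, one observes $\lcp_w(\SSA_{P\cup Q}[i-1], \SSA_{P\cup Q}[i])$ lies within $\tau$ of $\tau \cdot \LCP_{M'}[\,\cdot\,]$ and pins it down by a bounded local scan.) Summing up: merging costs $O(n)$, the meta-string SA/LCP construction costs $O(n/\tau)$, and the LCP correction costs $O(n)$, for $O(n)$ time and $O(n/\tau)$ space overall.

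The main obstacle I anticipate is the cross-offset LCP reconciliation: because $p \neq q$ the two suffixes' meta-character decompositions are shifted relative to each other, so the LCP array of $M'$ does not directly give $\SLCP_{P\cup Q}$. Handling this cleanly — arguing that a bounded ($O(\tau)$) local character scan at each of the $O(n/\tau)$ adjacent pairs suffices, and that this does not blow the space budget — is the delicate part; everything else is bookkeeping on top of the linear-time suffix-array machinery.
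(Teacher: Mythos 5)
Your proposal is correct and follows essentially the same route as the paper: merge the two pre-sorted meta-character lists with $O(\tau)$-time block comparisons, rename to an integer alphabet of size $O(n/\tau)$ (taking care that equal meta-characters receive equal ranks), run a linear-time suffix array and LCP construction on the resulting meta-string with a separator, and recover $\SLCP_{P\cup Q}$ from the meta-LCP values via at most $\tau$ extra character comparisons per adjacent pair. The cross-offset ``misalignment'' you worry about is not actually an obstacle, since the meta-character blocks are aligned relative to each suffix's own starting position, which is all the LCP computation needs; your bounded local scan is exactly the paper's fix.
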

\begin{proof}
  We first compute
  $\CA{w^\prime}$ for meta-string
  $w^\prime = \Meta{w}{\tau}{p}0\Meta{w}{\tau}{q}$.
  This can be done in $O(n)$ time and $O(\frac{n}{\tau})$ space
  by merging
  $\CA{\Meta{w}{\tau}{p}}$
  and $\CA{\Meta{w}{\tau}{q}}$,
  (and adding $|\Meta{w}{\tau}{p}0|$ to entries in $\CA{\Meta{w}{\tau}{q}}$)
  since each comparison of meta characters can be done in $O(\tau)$ time.
  Using $\CA{w^\prime}$, we then rename the characters of $w^\prime$ and create a string
  $w^*$ such that $w^*[i] = |\{ w^\prime[j] \mid w^\prime[j] < w^\prime[i], 0 \leq j < |w^\prime|\}| + 1$,
  in $O(n)$ time and $O(\frac{n}{\tau})$ space.
  Since $w^*$ consists of integers bounded by its length, we can apply any linear time suffix
  sorting algorithm and compute $\SA_{w^*}$ and $\LCP_{w^*}$ in
  $O(\frac{n}{\tau})$ time and space.
  As the lexicographic order of suffixes of $w^*$
  (except for $\SSA_{w^*}[0] = |\Meta{w}{\tau}{p}|$)
  correspond to the
  lexicographic order of suffixes of $w$ that start at positions in $P\cup Q$,
  we can obtain $\SSA_{P\cup Q}$ from $\SA_{w^*}$ by appropriately translating the indices.
  More precisely, 
  for $1 \leq i < |w^\prime|$, let $\SSA_{w^*}[i] = j$.
  If $0 \leq j < |\Meta{w}{\tau}{p}|$,
  then $\SSA_{P\cup Q}[i-1] = j \tau + p$,
  and otherwise (if
  $|\Meta{w}{\tau}{p}0| \leq j < |w^\prime|$), then
  $\SSA_{P\cup Q}[i-1] = (j - |\Meta{w}{\tau}{p}0|)\tau + q$.
  We can also obtain $\SLCP_{P\cup Q}$ from $\LCP_{w^*}$
  by multiplying a factor of $\tau$
  and doing up to $\tau$ character comparisons per pair of adjacent suffixes
  in the suffix array, in a total of $O(n)$ time.
\end{proof}
\begin{corollary}
  \label{corollary:successive_ssa}
  For any string $w$, let $p = n\bmod\tau$.
  $\SSA_{P\cup Q}$ and $\SLCP_{P\cup Q}$ can be computed successively
  for each $q = p, (p-1)\bmod\tau, \ldots, (p-\tau+1)\bmod\tau$, where
  $P = [0..n-1]^\tau_p$ and $Q = [0..n-1]^\tau_q$,
  in $O(n\tau)$ time using $O(\frac{n}{\tau})$ space.
\end{corollary}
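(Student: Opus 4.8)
The plan is to assemble the statement directly from Lemmas~\ref{lemma:sortingmetacharacters}, \ref{lemma:sortingnextmetacharacters}, and~\ref{lemma:lcp_for_twomods}, exploiting the fact that only the very first step is permitted to cost $O(n\tau)$ time while each of the $\tau$ offsets is then handled in $O(n)$ time. First I would compute $\CA{\Meta{w}{\tau}{p}}$ once, via Lemma~\ref{lemma:sortingmetacharacters}, in $O(n\tau)$ time and $O(n/\tau)$ space, and retain an unmodified copy of it for the entire procedure, since every later invocation of Lemma~\ref{lemma:lcp_for_twomods} needs $\CA{\Meta{w}{\tau}{p}}$ as one of its two inputs.

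Next I would iterate over the offsets in the prescribed order $q = p,\, (p-1)\bmod\tau,\, \ldots,\, (p-\tau+1)\bmod\tau$, maintaining a single working array holding $\CA{\Meta{w}{\tau}{q}}$ for the current $q$; initially this working array is a copy of $\CA{\Meta{w}{\tau}{p}}$. When advancing from the previous offset $q^\prime$ to the next offset $q$, the chosen descending (modulo $\tau$) order guarantees $q^\prime = (q+1)\bmod\tau$, so Lemma~\ref{lemma:sortingnextmetacharacters} converts the working array from $\CA{\Meta{w}{\tau}{q^\prime}}$ into $\CA{\Meta{w}{\tau}{q}}$ in $O(n)$ time and $O(n/\tau)$ space, and iterating this walks through all $\tau$ residue classes exactly once. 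For each $q$ I would then feed the retained copy of $\CA{\Meta{w}{\tau}{p}}$ together with the current working array $\CA{\Meta{w}{\tau}{q}}$ into Lemma~\ref{lemma:lcp_for_twomods}, obtaining $\SSA_{P\cup Q}$ and $\SLCP_{P\cup Q}$ in $O(n)$ time and $O(n/\tau)$ space, and pass them to the caller before moving on. The degenerate first round $q = p$, where $P = Q$, is handled directly: $\SSA_{P\cup Q} = \SSA_P$ and $\SLCP_{P\cup Q} = \SLCP_P$ follow from $\CA{\Meta{w}{\tau}{p}}$ by the renaming and linear-time suffix-sorting argument of Lemma~\ref{lemma:lcp_for_twomods} applied to a single copy of the meta-string. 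Summing, the single call to Lemma~\ref{lemma:sortingmetacharacters} contributes $O(n\tau)$ and the $\tau$ rounds contribute $\tau\cdot O(n)=O(n\tau)$, for $O(n\tau)$ overall.

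The delicate part, and the only real obstacle, is the $O(n/\tau)$ space bound, which requires three observations. First, Lemma~\ref{lemma:sortingnextmetacharacters} overwrites its input when producing the next array, so the chain must not be run on the retained copy of $\CA{\Meta{w}{\tau}{p}}$ but on a separate working copy; since both arrays have size $O(n/\tau)$ this costs nothing asymptotically. Second, the word ``successively'' in the statement is essential: holding all $\tau$ pairs $(\SSA_{P\cup Q},\SLCP_{P\cup Q})$ at once would need $\Theta(n)$ space, so each pair---and all temporaries internal to Lemma~\ref{lemma:lcp_for_twomods}---must be released before the next offset is processed. Third, I would write out the live-data inventory explicitly: at any instant we store only the retained copy of $\CA{\Meta{w}{\tau}{p}}$, the working $\CA$ array, the current output pair, and $O(n/\tau)$ scratch space, all of which is $O(n/\tau)$, completing the argument.
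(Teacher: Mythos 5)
Your proposal is correct and follows essentially the same route as the paper: one application of Lemma~\ref{lemma:sortingmetacharacters} for the initial offset, then Lemma~\ref{lemma:sortingnextmetacharacters} to step through the remaining offsets in decreasing order (which is exactly why the corollary lists them as $p, (p-1)\bmod\tau,\ldots$), feeding each resulting pair of $\CA{}$ arrays into Lemma~\ref{lemma:lcp_for_twomods}. Your added remarks on keeping a separate retained copy of $\CA{\Meta{w}{\tau}{p}}$ and discarding each output pair before the next round are sensible implementation details that the paper leaves implicit.
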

\begin{proof}
  For $p = q$, we first compute
  $\CA{\Meta{w}{\tau}{p}} = \CA{\Meta{w}{\tau}{q}}$ using Lemma~\ref{lemma:sortingmetacharacters}.
  By applying Lemma~\ref{lemma:sortingnextmetacharacters},
  we can successively compute $\CA{\Meta{w}{\tau}{q}}$
  for $q = (p-1)\bmod\tau,\ldots,(p-\tau+1)\bmod\tau$.
  Thus, with Lemma~\ref{lemma:lcp_for_twomods}, we can successively
  compute $\SSA_{P\cup Q}$ and $\SLCP_{P\cup Q}$
  in $O(n\tau)$ total time and $O(\frac{n}{\tau})$ space.
\end{proof}

\subsection{Faster Construction of Bille et al.'s Data Structure}
\label{subsection:construction_bille}
We show that Bille et al.'s data structure can be constructed in
$O(n\tau + n\log\frac{n}{\tau})$ using $O(\frac{n}{\tau})$ space.
Let $p = (n-1)\bmod\tau$. Using Corollary~\ref{corollary:successive_ssa}, we successively compute
$\SSA_{P\cup Q}$ and $\SLCP_{P\cup Q}$ for each $q = p, (p-1)\bmod\tau, \ldots, (p-\tau+1)\bmod\tau$, where
$P = [0..n-1]^\tau_p$ and $Q = [0..n-1]^\tau_q$. This can be done in a total of $O(n\tau)$ time,
and $O(\frac{n}{\tau})$ space.
Recall that
$t = \tau\ceil{\log \frac{n}{\tau}}$, and $\mathcal{P} = [0..n-1]^t_p$, where $p = (n-1) \bmod t$.
Since $t$ is a multiple of $\tau$, we have $\mathcal{P} \subseteq P$.

For each $q$ we do the following.
$\SLCP_{P\cup Q}$ is preprocessed in $O(\frac{n}{\tau})$ time and space to
answer RMQ in constant time, thus allowing us to compute
$\lcp(i,j)$ for any $i,j \in P\cup Q$ in constant time.
For any interval $I_v\subseteq [0..n-1]$ corresponding to a node in the binary tree
let $I^q_v = I_v\cap ( P \cup Q )$.
Note that for $I_{\mathsf{root}} = [0..n-1]$,  $\SSA_{I_{\mathsf{root}}^q} = \SSA_{P\cup Q}$.
Now, for any node $I_v$, assume that $\SSA_{I_v^q}$ is already computed.
By simple linear time scans on $\SSA_{I_v^q}$, we can obtain, for each
sampled position $i  = \SSA_{I_v^q}[x] \in I_{r(v)}^q\cap \mathcal{P}$,
the two suffixes $\SSA_{I_v^q}[j^-],\SSA_{I_v^q}[j^+] \in I_{\ell(v)}^q\cap Q$
which are lexicographically closest to $i$,
i.e.,
$j^- = \max \{ j < x \mid \SSA_{I_v^q}[j] \in I_{\ell(v)}^q\cap Q\}$,
$j^+ = \min \{ j > x \mid \SSA_{I_v^q}[j] \in I_{\ell(v)}^q\cap Q\}$, if they exist.
Then, the longer of $\lcp(i,\SSA_{I_v^q}[j^-])$ and $\lcp(i,\SSA_{I_v^q}[j^+])$ 
gives
$\pi(i,I_{\ell(v)}^q\cap Q) = \arg\max_{i^{\prime} \in I_{\ell(v)}^q\cap Q} \{ \lcp(i,i^{\prime}) \}$
and $L(i,I_{\ell(v)}^q\cap Q) = \lcp(i, \pi(i,I_{\ell(v)}^q\cap Q))$.
Since $i,\SSA_{I_v^q}[j^+],\SSA_{I_v^q}[j^-] \in P\cup Q$,
these values can be computed in constant time,
and thus can be computed in $O(|I_v^q|)$ total time for all sampled positions $i\in I_{r(v)}^q\cap \mathcal{P}$.
Next, for the child intervals, $\SSA_{I_{\ell(v)}^q}$ and $\SSA_{I_{r(v)}^q}$
can be computed in $O(|I_v^q|)$ time by a simple scan on $\SSA_{I_v^q}$, and the
computation is recursed for each child.
Since the union of $I_v^q\cap Q$ over all $q$ is $I_{v}$,
we have 
$\pi(i,I_{\ell(v)}) = \pi(i,I_{\ell(v)}^{\hat{q}})$
and $L(i,I_{\ell(v)}) = L(i,I_{\ell(v)}^{\hat{q}})$,
where $\hat{q} = \arg\max_{0\leq q^\prime < \tau} \{ \lcp(i, \pi(i,I_{\ell(v)}^{q^\prime}\cap Q)) \}$,
so we can obtain $\pi(i,I_{\ell(v)})$ and $L(i,I_{\ell(v)})$ 
for each sampled position $i$ and interval $I_v$
by repeating the above process for each $q$.

Since the processing at each node is linear in the size of the arrays
whose total size at a given depth is $O(\frac{n}{\tau})$,
the total time for the recursion is $O(\frac{n}{\tau}\log\frac{n}{\tau})$ for each $q$.
Thus in total, the preprocessing can be done in $O(n\tau + n\log\frac{n}{\tau})$ time.

\begin{theorem}
  For any string of length $n$ and integer $1 \leq \tau \leq n$,
  a data structure of size $O(n/\tau)$ can be constructed in
  $O(n\tau + n\log\frac{n}{\tau})$ time using $O(\frac{n}{\tau})$ space,
  such that
  for any $0 \leq i,j < n$, $\lcp(i,j)$ can be answered in $O(\tau\log^2\frac{n}{r})$ time.
\end{theorem}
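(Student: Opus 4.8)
The plan is to assemble the two parts of the structure of Section~\ref{subsection:billetal} separately. The part that handles queries with $j-i<t$ is built exactly as in Bille et al.~\cite{bille15:_longes_common_exten_sublin_space}, which takes $O(n)$ time and $O(\frac{n}{\tau})$ space, so it suffices to give an efficient construction of the tree-based part, i.e.\ of the values $\pi(i,I_{\ell(v)})$ and $L(i,I_{\ell(v)})$ over all internal nodes $v$ and all sampled positions $i\in I_{r(v)}\cap\mathcal{P}$. This is precisely the construction worked out in Section~\ref{subsection:construction_bille}, so the proof is essentially a matter of collecting the pieces and checking the space bound.

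First I would invoke Corollary~\ref{corollary:successive_ssa} with sampling rate $\tau$ and offset $p=(n-1)\bmod\tau$: this produces, one after another for each of the $\tau$ offsets $q\in\{p,(p-1)\bmod\tau,\dots,(p-\tau+1)\bmod\tau\}$, the arrays $\SSA_{P\cup Q}$ and $\SLCP_{P\cup Q}$ with $P=[0..n-1]^\tau_p$ and $Q=[0..n-1]^\tau_q$, in $O(n\tau)$ total time and $O(\frac{n}{\tau})$ space. Since $t=\tau\ceil{\log\frac{n}{\tau}}$ is a multiple of $\tau$, we have $\mathcal{P}=[0..n-1]^t_p\subseteq P$, so every sampled position occurs in every such array. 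For a fixed $q$ I would preprocess $\SLCP_{P\cup Q}$ for range minimum queries in $O(\frac{n}{\tau})$ time, which gives $O(1)$-time evaluation of $\lcp(a,b)$ for any $a,b\in P\cup Q$, and then walk down the binary tree of intervals maintaining, for the current node $v$, the sub-array $\SSA_{I_v^q}$ of $\SSA_{P\cup Q}$ restricted to $I_v\cap(P\cup Q)$, starting from $\SSA_{I_{\mathsf{root}}^q}=\SSA_{P\cup Q}$. At node $v$, a single left-to-right scan of $\SSA_{I_v^q}$ finds, for each sampled position $i\in I_{r(v)}^q\cap\mathcal{P}$, the two entries of $I_{\ell(v)}^q\cap Q$ lexicographically closest to $i$; comparing the two corresponding $\lcp$ values (each obtained in $O(1)$ time) yields $\pi(i,I_{\ell(v)}^q\cap Q)$ and $L(i,I_{\ell(v)}^q\cap Q)$, and splitting $\SSA_{I_v^q}$ into $\SSA_{I_{\ell(v)}^q}$ and $\SSA_{I_{r(v)}^q}$ costs $O(|I_v^q|)$. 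The arrays at any fixed depth of the tree partition $P\cup Q$, so processing the tree level by level costs $O(\frac{n}{\tau}\log\frac{n}{\tau})$ time and $O(\frac{n}{\tau})$ space for this $q$.

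Next I would combine the information across the $\tau$ offsets. Because $\bigcup_q Q=[0..n-1]$, we have $I_{\ell(v)}=\bigcup_q\bigl(I_{\ell(v)}\cap Q\bigr)$, so $\pi(i,I_{\ell(v)})$ and $L(i,I_{\ell(v)})$ are obtained by taking, over all $q$, the candidate $\pi(i,I_{\ell(v)}^q\cap Q)$ achieving the largest $L(i,I_{\ell(v)}^q\cap Q)$. To stay within $O(\frac{n}{\tau})$ space I would not store all $\tau$ versions at once: I keep one running pair $\bigl(\pi(i,I_{\ell(v)}),L(i,I_{\ell(v)})\bigr)$ per node $v$ and sampled position $i$ — a table of size $O(\frac{n}{t}\log\frac{n}{t})=O(\frac{n}{\tau})$ — and update it while the arrays for the current $q$ are being produced by Corollary~\ref{corollary:successive_ssa}, discarding those arrays before moving to the next $q$. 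Summing, the tree-based part is built in $O(n\tau)+\tau\cdot O(\frac{n}{\tau}\log\frac{n}{\tau})=O(n\tau+n\log\frac{n}{\tau})$ time within $O(\frac{n}{\tau})$ space; adding the $O(n)$-time, $O(\frac{n}{\tau})$-space construction of the $j-i<t$ part gives the stated construction bounds. The size $O(\frac{n}{\tau})$ and the query time $O(\tau\log^2\frac{n}{\tau})$ are exactly those established in Section~\ref{subsection:billetal}. The one delicate point is the bookkeeping: verifying that the restricted arrays $\SSA_{I_v^q}$ can indeed be maintained purely by scans down the tree, and that the cross-$q$ aggregation is done online so that the working space never exceeds $O(\frac{n}{\tau})$; everything else is a direct assembly of Corollary~\ref{corollary:successive_ssa} and the query analysis of Section~\ref{subsection:billetal}.
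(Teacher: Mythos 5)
Your proposal is correct and follows essentially the same route as the paper's own construction in Section~\ref{subsection:construction_bille}: Corollary~\ref{corollary:successive_ssa} for the $\tau$ rounds of $\SSA_{P\cup Q}$/$\SLCP_{P\cup Q}$, RMQ preprocessing per offset, top-down scans of the restricted arrays $\SSA_{I_v^q}$ to find the lexicographically adjacent candidates, and online aggregation of the best candidate over all $q$ into a single $O(\frac{n}{\tau})$-size table. The time and space accounting matches the paper's, so nothing further is needed.
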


\subsection{Fast Construction of New Data Structure}
\label{subsection:construction_new}
Let $p = (n-1)\bmod\tau$. Using Corollary~\ref{corollary:successive_ssa}, we successively compute
$\SSA_{P\cup Q}$ and $\SLCP_{P\cup Q}$ for each $q = p, (p-1)\bmod\tau, \ldots, (p-\tau+1)\bmod\tau$, where
$P = [0..n-1]^\tau_p$ and $Q = [0..n-1]^\tau_q$. This can be done in a total of $O(n\tau)$ time,
and $O(\frac{n}{\tau})$ space.
Recall that
$t = \tau\ceil{\log\tau}$, and $\mathcal{P} = [0..n-1]^t_p$, where $p = (n-1) \bmod t$.
Since $t$ is a multiple of $\tau$, we have $\mathcal{P} \subseteq P$.

For each $q$ we do the following.
$\SLCP_{P\cup Q}$ is preprocessed in $O(\frac{n}{\tau})$ time and space to
answer RMQ in constant time, thus allowing us to compute
$\lcp(i,j)$ for any $i,j \in P\cup Q$ in constant time.
Let $S_k^q = S_k\cap Q$ for any $1\leq k\leq\ceil{\log t}$.
Next, we conduct for each $k = 1,\ldots,\ceil{\log t}$,
linear time scans on $\SSA_{P\cup Q}$ so that
for each sampled position $i = \SSA_{P\cup Q}[x] \in \mathcal{P}$,
the two suffixes $\SSA_{P\cup Q}[j^-],\SSA_{P\cup Q}[j^+] \in S_k^q$
which are lexicographically closest to $i$, i.e.,
$j^- = \max \{ j < x \mid \SSA_{P\cup Q}[j] \in S_k^q\}$,
$j^+ = \min \{ j > x \mid \SSA_{P\cup Q}[j] \in S_k^q\}$, if they exist.
Then, the longer of $\lcp(i,\SSA_{P\cup Q}[j^-])$ and $\lcp(i,\SSA_{P\cup Q}[j^+])$ 
gives
$\pi(i,S_k^q) = \arg\max_{i^{\prime} \in S_k^q} \{ \lcp(i,i^{\prime}) \}$.
Since $i,\SSA_{P\cup Q}[j^+],\SSA_{P\cup Q}[j^-] \in P\cup Q$,
these values can be computed in constant time,
resulting in a total of $O(\frac{n}{\tau}\log\tau)$ time for all $i$ and $k$.
Since the union of $S_k^q$ over all $q$ is $S_k$,
we have 
$\pi(i,S_k) = \pi(i,S_k^{\hat{q}})$ and $L(i,S_k) = L(i,S_k^{\hat{q}})$,
where $\hat{q} = \arg\max_{0\leq q^\prime < \tau} \{ \lcp(i, \pi(i,S_k^{q^\prime})) \}$,
so we can obtain $\pi(i,S_k)$ and $L(i,S_k)$ 
for each sampled position $i$ and $S_k$
by repeating the above process for each $q$, taking $O(n\log\tau)$ time.
Thus, the total time for preprocessing is dominated by Corollary~\ref{corollary:successive_ssa},
and is $O(n\tau)$.

\begin{theorem}[Fast Construction of New Data Structure]\label{theorem:fastnew}
  For any string of length $n$ and integer \hbnote*{is this optimal?}{$1 \leq \tau \leq \frac{n}{\log n}$},
  a data structure of size $O(n/\tau)$ can be constructed in
  $O(n\tau)$ time using $O(\frac{n}{\tau})$ space,
  such that
  for any $0 \leq i,j < n$, $\lcp(i,j)$ can be answered in $O(\tau\log\tau)$ time.
\end{theorem}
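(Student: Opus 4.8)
\emph{Proof proposal.} The plan is to assemble the statement from the data structure of Section~\ref{subsection:newdatastructure} and its construction in Section~\ref{subsection:construction_new}; essentially all the work has already been done there, so the proof is a matter of citing the right pieces and checking the interface conditions. First I would fix $t = \tau\ceil{\log\tau}$, $p = (n-1)\bmod t$, $\mathcal{P} = [0..n-1]^t_p$, and $P = [0..n-1]^\tau_p$. The hypothesis $\tau \le \frac{n}{\log n}$ guarantees $t = \tau\ceil{\log\tau} \le \tau\ceil{\log n} \le n$ (up to rounding), so $\mathcal{P}$ is nonempty, the classes $S_1,\ldots,S_{\ceil{\log t}}$ are well defined, and $\log(n/\tau)\ge 1$ as required by the radix-sort analysis behind Lemma~\ref{lemma:sortingmetacharacters}; since $t$ is a multiple of $\tau$ we also have $\mathcal{P}\subseteq P$.

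For the construction bound I would invoke Corollary~\ref{corollary:successive_ssa}: over the $\tau$ offsets $q = p, (p-1)\bmod\tau,\ldots,(p-\tau+1)\bmod\tau$, the arrays $\SSA_{P\cup Q}$ and $\SLCP_{P\cup Q}$ (with $Q = [0..n-1]^\tau_q$) are produced successively in $O(n\tau)$ total time and $O(\frac{n}{\tau})$ space. For each $q$ I preprocess $\SLCP_{P\cup Q}$ for constant-time RMQ and, for each $k$, run a left-to-right and a right-to-left scan of $\SSA_{P\cup Q}$ to find, for every sampled $i\in\mathcal{P}$, the lexicographic predecessor and successor of $i$ among $S_k^q = S_k\cap Q$; the larger of the two $\lcp$ values (each $O(1)$ by RMQ, since both arguments lie in $P\cup Q$) gives $\pi(i,S_k^q)$ and $L(i,S_k^q)$, at cost $O(\frac{n}{\tau}\log\tau)$ per offset. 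Because $\{S_k^q\}_{0\le q<\tau}$ partitions $S_k$, selecting for each $i,k$ the offset $\hat q$ maximizing $\lcp(i,\pi(i,S_k^{\hat q}))$ reconstructs the true $\pi(i,S_k)$ and $L(i,S_k)$. Summed over all $\tau$ offsets this postprocessing is $O(n\log\tau)$, dominated by the $O(n\tau)$ of Corollary~\ref{corollary:successive_ssa}; only the current arrays plus the tables being built are kept, so the space stays $O(\frac{n}{\tau})$. Finally $\SLCP_{\mathcal{P}}$ is extracted and preprocessed for RMQ, adding only $O(\frac{n}{\tau})$.

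For the size and query time I would quote Section~\ref{subsection:newdatastructure} directly: the tables $\pi(\cdot,S_k),L(\cdot,S_k)$ have $|\mathcal{P}|\cdot\ceil{\log t} = O(\frac{n}{t}\log t) = O(\frac{n}{\tau})$ entries and the RMQ structure over $\SLCP_{\mathcal{P}}$ is $O(\frac{n}{\tau})$, so the structure has size $O(\frac{n}{\tau})$. A query $\lcp(i,j)$ compares up to $\delta<t$ characters until $i+\delta$ or $j+\delta$ is sampled; if both are, it finishes in $O(1)$ by RMQ on $\SLCP_{\mathcal{P}}$, else Corollary~\ref{corollary:main} reduces it to a query inside the class $S_k$ containing the still-unsampled position, and since the following phase compares at least $2^{k-1}$ characters before reaching a sampled position, the surviving position lies in some $S_{k'}$ with $k'\le k-1$; hence at most $\ceil{\log t}$ phases, with per-phase comparison counts bounded by $t$ and at least halving, for a total of $O(t) = O(\tau\log\tau)$.

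I expect the only genuine checkpoint to be the correctness of the aggregation over offsets: that $\bigcup_{0\le q<\tau} S_k^q = S_k$ with the pieces pairwise disjoint (immediate, since each position has a unique residue mod $\tau$), so that maximizing $\lcp(i,\cdot)$ over the per-offset winners indeed reproduces $\pi(i,S_k)$, and the parallel fact $\mathcal{P}\subseteq P$, which ensures every sampled position occurs in every $\SSA_{P\cup Q}$ so the scans see all of them. Everything else is bookkeeping already carried out in Sections~\ref{subsection:newdatastructure} and~\ref{subsection:construction_new}.
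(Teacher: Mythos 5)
Your proposal is correct and follows essentially the same route as the paper: Corollary~\ref{corollary:successive_ssa} for the $O(n\tau)$-time generation of the arrays $\SSA_{P\cup Q}$, $\SLCP_{P\cup Q}$, per-offset RMQ plus lexicographic predecessor/successor scans to get $\pi(i,S_k^q)$, and aggregation over the offsets $q$ via $\hat q = \arg\max_{q'}\lcp(i,\pi(i,S_k^{q'}))$, with size and query time quoted from Section~\ref{subsection:newdatastructure}. The checkpoint you flag (that the $S_k^q$ cover $S_k$ and that $\mathcal{P}\subseteq P$ so every sampled position appears in every scan) is exactly the justification the paper relies on.
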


\subsection{Fast Construction of Combined Data Structure}
The construction of the combined data structure is done using the same algorithms as described
in Sections~\ref{subsection:construction_bille} and~\ref{subsection:construction_new}, but with minor modifications.
For Bille et al.'s data structure, we only need to consider in addition to sampled positions,
the positions in $D = \{ i^\prime \mid (i^\prime+d)\bmod t = p, 0 \leq d < d_t \}$
due to the modification introduced for the combination.
This reduces the array sizes (and thus the computation time) needed for the computation
of $\pi(i,I_{\ell(v)})$ and $\pi(i,I_{r(v)})$ (and $L(i,I_{\ell(v)})$ and $L(i,I_{r(v)})$)
to
$O(\frac{n}{t} + \frac{n}{t}\cdot\frac{t^2}{n}\cdot\frac{1}{\tau}) = O(\frac{n}{t} + \frac{t}{\tau}) = O(\frac{n}{\tau\log\frac{n}{\tau}} + \log\frac{n}{\tau})$
for a total of
$O(\frac{n}{\tau} + \log^2\frac{n}{\tau})$
for all depths, and for all $q$, we get
$O(n + \tau\log^2\frac{n}{\tau}) = O(n + n\frac{\log^2\frac{n}{\tau}}{\frac{n}{\tau}})) = O(n)$.
Thus, the total time for preprocessing is now dominated by Corollary~\ref{corollary:successive_ssa},
and is $O(n\tau)$.

\begin{theorem}\label{theorem:combined}
  For any string of length $n$ and integer $1 \leq \tau \leq n$,
  a data structure of size $O(n/\tau)$ can be constructed in
  $O(n\tau)$ time using $O(\frac{n}{\tau})$ space,
  such that
  for any $0 \leq i,j < n$, $\lcp(i,j)$ can be answered in $O(\tau\log\frac{n}{\tau})$ time.
\end{theorem}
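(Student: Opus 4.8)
The plan is to package together the two pieces already developed: the combined data structure of Section~\ref{sec:new}, which supplies the $O(n/\tau)$ space bound and the $O(\tau\log\frac n\tau)$ query time, and the construction of Section~\ref{subsection:construction_new} modified as in the paragraph immediately preceding the theorem, which supplies the $O(n\tau)$ construction time. So the proof reduces to (a) restating the combined structure and re-verifying its two bounds, and (b) carrying out the modified construction and bounding its time.

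For (a) I would fix $t=\tau\ceil{\log\frac n\tau}$, $p=(n-1)\bmod t$, $\mathcal P=[0..n-1]^t_p$, and $d_t=2^{\ceil{\log t}-\ceil{\log(n/t)}}=O(t^2/n)$. The structure stores (i) the Bille et al.\ tree in which each node $v$ keeps left-child targets $\pi(i,I_{\ell(v)}\cap D)$, $L(i,I_{\ell(v)}\cap D)$ for sampled $i\in I_{r(v)}$ and, additionally, right-child targets $\pi(i,I_{r(v)}\cap D)$, $L(i,I_{r(v)}\cap D)$ for sampled $i\in I_{\ell(v)}$, where $D=\{i':(i'+d)\bmod t=p,\ 0\le d<d_t\}$; (ii) the tables $\pi(i,S_k),L(i,S_k)$ for $i\in\mathcal P$ and the $O(\log\frac nt)$ indices $k\in[\ceil{\log t}-\ceil{\log(n/t)}..\ceil{\log t}]$; and (iii) $\SLCP_{\mathcal P}$ with an RMQ index. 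The space is $O(\frac nt\log\frac nt)=O(n/\tau)$. A query runs the two-phase algorithm of Section~\ref{sec:new}: phase one applies the new-structure recursion, which after $O(t)$ character comparisons leaves a pair in which each position is within $d_t$ of a sampled position; phase two applies the modified Bille et al.\ recursion, and the crux is to check that confining the left- and right-child targets to $D$ makes the ``within $d_t$'' property self-sustaining down the tree, so that each of its $O(\log\frac nt)$ levels needs only $O(d_t)$ comparisons, for a phase-two total of $O(\frac{t^2}{n}\log\frac nt)=O(t)=O(\tau\log\frac n\tau)$ (the estimate $\frac{t^2}{n}\log\frac nt=t\cdot\frac{\log x}{x}=O(t)$ for $x=n/t$).

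For (b) I would first apply Corollary~\ref{corollary:successive_ssa} to build $\SSA_{P\cup Q}$ and $\SLCP_{P\cup Q}$ in $O(n\tau)$ time and $O(n/\tau)$ space for all $\tau$ offsets $q$, where $P$ is the rate-$\tau$ class containing $\mathcal P$ (this uses $\tau\mid t$) and $Q$ is the rate-$\tau$ class of offset $q$. For each $q$, I would index $\SLCP_{P\cup Q}$ for RMQ and then recover the $q$-restrictions of the tables in (i) and (ii) by the nearest-lexicographic-neighbour scans of Sections~\ref{subsection:construction_bille}--\ref{subsection:construction_new} (for (i), while recursively partitioning $\SSA$ down the Bille tree and reading the two candidate $\lcp$ values at each sampled position off the RMQ index; for (ii), by a pass that keeps, for each retained $k$, the most recent seen position of $S_k\cap Q$); finally take the best value over all $q$ for each sampled $i$. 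The only running-time estimate that must be redone is the $D$-truncated Bille part: the array handled at a tree node shrinks from $O(\frac nt)$ to $O(\frac nt+\frac t\tau)$, which telescopes to $O(\frac n\tau+\log^2\frac n\tau)$ over all depths and to $O(n)$ over all $q$; the remaining work is $O(n\tau)$ from Corollary~\ref{corollary:successive_ssa}, which dominates.

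The main obstacle is the correctness of phase two of the query: one must confirm that the two modifications to Bille et al.'s stored targets are exactly what is needed to keep the ``distance $\le d_t$ to the nearest sampled position'' invariant alive as the recursion descends, in particular when a step enters a \emph{left} child --- a case the unmodified structure never had to handle. Once that invariant and the identities $d_t=O(t^2/n)$ and $\frac{t^2}{n}\log\frac nt=O(t)$ are established, the space bound and the construction analysis are routine bookkeeping on top of the already-proved Corollary~\ref{corollary:successive_ssa} and Sections~\ref{subsection:construction_bille}--\ref{subsection:construction_new}.
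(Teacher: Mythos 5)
Your proposal is correct and follows essentially the same route as the paper: the combined structure of Section~\ref{sec:new} supplies the space and query bounds (with the two-phase query and the $D$-restricted left- and right-child targets maintaining the distance-$\le d_t$ invariant, exactly the point the paper also relies on), and the construction is Corollary~\ref{corollary:successive_ssa} plus the per-$q$ nearest-lexicographic-neighbour scans of Sections~\ref{subsection:construction_bille}--\ref{subsection:construction_new}, with the same $O(\frac{n}{t}+\frac{t}{\tau})$ per-node accounting telescoping to $O(n)$ over all $q$. The ``obstacle'' you flag is resolved exactly as you suggest, so no further ideas are needed.
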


Since
$\tau \leq \frac{n}{\tau}$ when $\tau \leq \sqrt{n}$, and
$\tau \geq \frac{n}{\tau}$ when $\tau \geq \sqrt{n}$, we get the following
by simply choosing the data structure of
Theorems~\ref{theorem:fastnew} and~\ref{theorem:combined},
depending on the value of $\tau$.
\begin{corollary}
  For any string of length $n$ and integer $1 \leq \tau \leq n$,
  a data structure of size $O(n/\tau)$ can be constructed in
  $O(n\tau)$ time using $O(\frac{n}{\tau})$ space,
  such that
  for any $0 \leq i,j < n$, $\lcp(i,j)$ can be answered in
  $O(\tau\min\{\log\tau, \log\frac{n}{\tau}\})$ time.
\end{corollary}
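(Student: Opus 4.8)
The plan is to prove the corollary by a straightforward case analysis on the size of $\tau$ relative to $\sqrt{n}$, selecting at construction time which of the two structures of Theorems~\ref{theorem:fastnew} and~\ref{theorem:combined} to build. First I would observe that in both of those theorems the space bound is $O(n/\tau)$ and the construction time is $O(n\tau)$, so whichever structure we pick those two parts of the claim hold verbatim; only the query time needs attention. It therefore suffices to show that for every value of $\tau$ at least one of the two structures answers queries in $O(\tau\min\{\log\tau,\log\frac{n}{\tau}\})$ time.

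Next I would split on whether $\tau\le\sqrt{n}$. If $\tau\le\sqrt{n}$, then $\tau^2\le n$, hence $\tau\le\frac{n}{\tau}$, and since $\log$ is increasing this gives $\log\tau\le\log\frac{n}{\tau}$, so $\min\{\log\tau,\log\frac{n}{\tau}\}=\log\tau$; here I would use the structure of Theorem~\ref{theorem:fastnew}, whose query time $O(\tau\log\tau)$ is exactly the target bound. If instead $\tau>\sqrt{n}$, then $\frac{n}{\tau}<\tau$, so $\log\frac{n}{\tau}\le\log\tau$ and $\min\{\log\tau,\log\frac{n}{\tau}\}=\log\frac{n}{\tau}$; here I would use the structure of Theorem~\ref{theorem:combined}, which is valid for the whole range $1\le\tau\le n$ and answers queries in $O(\tau\log\frac{n}{\tau})$ time, again matching the target. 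Combining the two cases yields the stated query time for every $1\le\tau\le n$, with the common space and construction bounds carried over from the two source theorems.

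The one point that needs care --- and the only real obstacle, albeit a minor one --- is that Theorem~\ref{theorem:fastnew} is stated only for $1\le\tau\le\frac{n}{\log n}$, whereas the first case above ranges over $\tau\le\sqrt{n}$. These two ranges are consistent precisely when $\sqrt{n}\le\frac{n}{\log n}$, i.e. $\log n\le\sqrt{n}$, which holds for all but a constant number of small values of $n$; for those finitely many exceptional $n$ the parameter $\tau$ and all claimed bounds are $O(1)$, so the corollary holds trivially there. I would dispatch this with a one-line remark (for $n$ above a fixed constant, $\tau\le\sqrt{n}$ implies $\tau\le\frac{n}{\log n}$, and otherwise $n=O(1)$), after which the case analysis goes through cleanly and the corollary follows.
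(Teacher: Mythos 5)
Your proposal is correct and matches the paper's own argument exactly: split on $\tau\le\sqrt{n}$ versus $\tau>\sqrt{n}$ and pick the structure of Theorem~\ref{theorem:fastnew} or Theorem~\ref{theorem:combined} accordingly. Your extra remark reconciling the range $\tau\le\sqrt{n}$ with the restriction $\tau\le\frac{n}{\log n}$ in Theorem~\ref{theorem:fastnew} is a small point the paper leaves implicit, and you handle it correctly.
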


\section{Applications}
\label{sec:appl}
Using the proposed data structure,
the lexicographic order between two arbitrary suffixes can be computed in
$O(\tau\min\{\log\tau,\log\frac{n}{\tau}\})$ time using $O(\frac{n}{\tau})$ space.
Thus, using any $O(n\log n)$ comparison based sorting algorithm,
we can compute the suffix array of a string of length $n$ in
$O(\min\{\log\tau,\log\frac{n}{\tau}\}n\tau\log n)$ time
using $O(\frac{n}{\tau})$ working space, excluding the input and output.
The best known deterministic space/time trade-off is
$O(n\tau^2)$ time (for $1 \leq \tau \leq \sqrt[4]{n}$)
using the same space~\cite{Karkkainen:2006je},
and our algorithm is better
when $\tau = \Omega(\log^{1+\epsilon} n)$ for any $\epsilon > 0$.

\section*{Acknowledgements}
HB,SI,MT were supported by JSPS KAKENHI Grant Numbers
25280086, 26280003, 25240003.

\clearpage

\bibliographystyle{plain}

\end{document}